\definecolor{darkblue}{rgb}{0.2,0.2,0.6}
\definecolor{darkblue2}{rgb}{0.2,0.2,0.9}
\definecolor{superdarkblue}{rgb}{0.2,0.2,0.3}
\definecolor{citegreen}{rgb}{0.2,0.2,0.6}
\newcommand\PT{\mathcal{PT}}
\newcommand\R{\mathbb{R}}
\newcommand\bb{\beta}
\newcommand\Th{\Theta}
\newcommand\name[1]{{\small\sc#1}}
\newcommand\df{\dot{f}}
\newcommand\limm{\lim_{m\arr \infty}}
\renewcommand\gg{\gamma}
\newcommand\tm{\times}
\newcommand{\eps}{\varepsilon}
\renewcommand{\aa}{\alpha}
\theoremstyle{definition}
\newcommand{\vast}{\bBigg@{3}}
\newcommand{\Vast}{\bBigg@{5}}
\newcommand{\eg}{{\it e.g.}\,}
\newcommand{\cf}{{\it cf.}\,}
\renewcommand\and{\qquad\text{and}\qquad}
\newcommand\sm{\setminus}
\newcommand\dl{\delta}
\newcommand{\comm}[1]{}
\renewcommand\aa{\alpha}
\newcommand\lm{\lambda}
\newcommand\s{\sigma}
\newcommand\p{\partial}
\newcommand\omg{\omega}
\newcommand\Omg{\Omega}
\newcommand\ii{{\mathsf{i}}}
\newcommand\arr{\rightarrow}
\newcommand\dd{{\mathsf{d}}}
\newcounter{counter_a}
\newenvironment{myenum}{\begin{list}{{\rm(\roman{counter_a})}}%
{\usecounter{counter_a}
\setlength{\itemsep}{1.ex}\setlength{\topsep}{0.8ex}
\setlength{\leftmargin}{5ex}\setlength{\labelwidth}{5ex}}}{\end{list}}
\numberwithin{figure}{section}
\numberwithin{equation}{section}
\theoremstyle{plain}
\newtheorem*{thm*}{Theorem}
\newtheorem{thm}{Theorem}
\newtheorem{hyp}{Hypothesis}
\newtheorem{lem}[thm]{Lemma}
\newtheorem{prop}[thm]{Proposition}
\newtheorem{dfn}[thm]{Definition}
\theoremstyle{remark}
\newtheorem{remark}[thm]{Remark}
\theoremstyle{plain}
\newcommand\ov{\overline}
\newcommand\sign{{\rm sign\,}}
\def\ov{\overline}
      \def\sC{{\mathfrak C}}
\def\sS{{\mathfrak S}}
      \def\dC{{\mathbb C}}
   \def\dK{{\mathbb K}}   
   \def\dN{{\mathbb N}}   
      \def\dR{{\mathbb R}}
   \def\dZ{{\mathbb Z}}
\def\sfA{{\mathsf A}}      
   \def\sfH{{\mathsf H}}   \def\sfI{{\mathsf I}}
   \def\sfK{{\mathsf K}}
\def\sfS{{\mathsf S}}   \def\sfT{{\mathsf T}}   \def\sfU{{\mathsf U}}
\def\sfV{{\mathsf V}}   \def\sfW{{\mathsf W}}
      \def\cC{{\mathcal C}}
   \def\cH{{\mathcal H}}   \def\cI{{\mathcal I}}
\def\cJ{{\mathcal J}}      
\def\cP{{\mathcal P}}      
   \def\cT{{\mathcal T}}   
      \def\cX{{\mathcal X}}
   \def\sfk{{\mathsf k}}
   \def\sft{{\mathsf t}}
\newcommand{\Tr}{\mathrm{Tr}\,}
\newcommand{\dom}{\mathrm{dom}\,}
\def\section{\@startsection{section}{1}\z@{.9\linespacing\@plus\linespacing}%
	{.7\linespacing} {\fontsize{13}{14}\selectfont\bfseries\centering}}
\def\subsection{\@startsection{subsection}{1}\z@{.9\linespacing\@plus\linespacing}%
	{.3\linespacing} {\fontsize{10}{12}\selectfont\bfseries}}	
\def\paragraph{\@startsection{paragraph}{4}%
	\z@{0.3em}{-.5em}%
	{$\bullet$ \ \normalfont\itshape}}
\renewcommand{\eg}{{\it e.g.}}
\renewcommand{\cf}{{\it cf.}}
\definecolor{DarkGreen}{rgb}{0,0.5,0.1} 
\newcommand\soutD{\bgroup\markoverwith
{\textcolor{DarkGreen}{\rule[.5ex]{2pt}{1pt}}}\ULon}
\newcommand{\Hm}[1]{\leavevmode{\marginpar{\tiny%
$\hbox to 0mm{\hspace*{-0.5mm}$\leftarrow$\hss}%
\vcenter{\vrule depth 0.1mm height 0.1mm width \the\marginparwidth}%
\hbox to
0mm{\hss$\rightarrow$\hspace*{-0.5mm}}$\\\relax\raggedright #1}}}
\begin{document}
\title[The minimally anisotropic metric operator in...]{ \textsc{The minimally anisotropic metric operator in quasi-Hermitian quantum mechanics} }

\author{David Krej\v{c}i\v{r}\'{i}k}
\address{Department of Mathematics, Faculty of Nuclear Sciences and 
	Physical Engineering, Czech Technical University in Prague, Trojanova 13, 12000 Prague 2,
	Czech Republic}
\email{david.krejcirik@fjfi.cvut.cz}

\author{Vladimir Lotoreichik}
\address{Department of Theoretical Physics, Nuclear Physics Institute, 	Czech Academy of Sciences, 25068 \v Re\v z, Czech Republic}
\email{lotoreichik@ujf.cas.cz}

\author{Miloslav Znojil}
\address{Department of Theoretical Physics, Nuclear Physics Institute, 	Czech Academy of Sciences, 25068 \v Re\v z, Czech Republic}
\email{znojil@ujf.cas.cz}

\begin{abstract}
We propose a unique way how to choose a new inner product in a Hilbert space with respect to which an originally non-self-adjoint 
operator similar to a self-adjoint operator becomes self-adjoint.
Our construction is based on minimising a 'Hilbert-Schmidt distance'
to the original inner product among the entire class of admissible inner products. 
We prove that either the minimiser exists and is unique, 
or it does not exist at all. In the former case we derive 
a system of Euler-Lagrange equations by which the optimal 
inner product is determined. 
A sufficient condition for the existence
of the unique minimally anisotropic metric 
is obtained.
The abstract results are supplied by
examples in which the optimal inner product does not coincide 
with the most popular choice fixed through a charge-like symmetry.
\end{abstract}	

\maketitle
\section{Introduction}
%
Quantum mechanics is a fundamental theory of modern science.
In addition to its enormous success in describing physical phenomena
and important technological impact, it is also mathematically exquisite
through the coherent implementation of functional analysis of operators in Hilbert spaces. It is intrinsically linear and self-adjoint:
physical observables are represented by linear self-adjoint operators 
and the time evolution is governed by unitary groups.

The self-adjointness of quantum theory does not mean 
that one need not deal with an analysis of non-self-adjoint operators 
when it comes to applications. The description of quantum scattering 
by means of a complex effective potential due to 
\name{H.~Feshbach} in 1958~\cite{Feshbach_1958}
is just an early example.
However, the non-self-adjointness arises in such approaches 
as a result of a technical method or a useful approximation 
to attack a concrete physical problem involving
observables still represented by self-adjoint operators.

A conceptually new point of view in this respect was suggested 
by nuclear physicists \name{F.\,G.~Scholtz}, \name{H.\,B.~Geyer},
and \name{F.\,J.\,W.~Hahne} in 1992~\cite{GHS}:
physical observables in quantum mechanics can be represented
by non-self-adjoint operators provided that they are 
\emph{quasi-self-adjoint}. 
They actually use the term \emph{quasi-Hermitian},
which was also previously used by the mathematician 
\name{J.~Dieudonn\'{e}} in 1961~\cite{Dieudonne_1961},
but we have decided to use a more modern mathematically terminology in this paper.
An operator~$\sfH$ in the Hilbert space~$\cH$
equipped with the inner product $\langle\cdot,\cdot\rangle$
is called quasi-self-adjoint if it is densely defined
and there exists a bounded, non-negative self-adjoint operator 
$\Th \colon \cH\arr\cH$ having a bounded inverse such that 
\begin{equation}\label{quasi}
  \sfH^* = \Th \sfH \Th^{-1} \,.
\end{equation}
Here~$\sfH^*$ denotes the adjoint of~$\sfH$ in~$\cH$
with respect to the inner product $\langle\cdot,\cdot\rangle$.

A quasi-self-adjoint~$\sfH$ is self-adjoint with respect to a
modified (but topologically equivalent) inner product
$\langle\cdot,\Th\,\cdot\rangle$ in~$\cH$. For this reason, the
operator $\Th$ is often called \emph{metric}.
Let us also remark that the quasi-self-adjointness of~$\sfH$
is equivalent to the fact that~$\sfH$ is similar to 
a self-adjoint operator in~$\cH$ with respect to 
the original inner product $\langle\cdot,\cdot\rangle$.
Consequently, the spectrum of a quasi-self-adjoint operator~$\sfH$
is purely real and~$\sfH$ generates a unitary time evolution
when considered in the Hilbert space with the modified inner product. 
In summary, by considering the larger class of quasi-self-adjoint operators,
one gets a useful flexibility (overlooked for almost one century)
in the mathematical description of physical observables in quantum mechanics.
We refer to the recent book~\cite{BGSZ} 
for more information on this new concept in quantum mechanics
and many references.

It is well known and easy to check that the choice of
the metric operator~$\Th$ in~\eqref{quasi} is inevitably non-unique. 
In fact, this diversity occurs already in the self-adjoint 
situation $\sfH^*=\sfH$ when one has a one-parametric 
sub-family of metrics $\Theta_\alpha := \alpha \sfI$ with $\alpha > 0$. 
A meaningful way of picking up a `good' metric operator
for a given quasi-self-adjoint operator 
is the subject of active ongoing research.
This issue is addressed already in the pioneering work~\cite{GHS},
where the uniqueness is partially settled by considering 
an irreducible set of quasi-self-adjoint operators.
For just one operator, however, there is no canonical way
how to choose the metric operator and the existing procedures
are typically motivated by simplicity of calculation
(see, \eg, \cite{KBZ,K4,KK5})
or extra physical-like symmetries
(see, \eg, \cite{BBJ,Albeverio-2005-38}). 

The goal of this paper is to present a new promising 
way of selecting the metric, 
which relies on a certain minimality condition.
We focus on quasi-self-adjoint operators 
having purely real simple discrete spectra. 
Moreover, we assume  that the eigenfunctions constitute 
a basis quadratically close to an orthonormal 
one \cite[\S VI.2]{Gohberg-Krein_1969}. 
For a quasi-self-adjoint operator~$\sfH$ in the Hilbert space~$\cH$
satisfying these hypotheses, 
it follows from \cite[\S 2.3]{KSZ} and Lemma~\ref{lem:HS} below
that~$\sfH$ possesses a sub-family of metrics 
of the form $\Th = \sfI + \sfK$, 
where $\sfK$ is a Hilbert-Schmidt operator in $\cH$ of a specific
structure. The metric in this sub-family with the smallest
Hilbert-Schmidt norm of $\sfK$ will be called \emph{minimally
anisotropic}.

Our main result formulated in Theorem~\ref{thm:main} shows that
either there exists a unique minimally anisotropic metric, 
or there is no such metric at all. 
In the former case we derive a system of
Euler-Lagrange equations, by which the minimally anisotropic metric is uniquely determined.
Furthermore, we provide in Proposition~\ref{prop:existence} a condition on the eigenfuctions of $\sfH^*$ which is sufficient
for the existence of the unique minimally	 
anisotropic metric for $\sfH$.

The abstract results are supported by finite- and infinite-dimensional
examples. These examples illuminate the mechanism behind
existence/non-existence of the minimally anisotropic metric. They
show that the minimally anisotropic metric need not coincide with
the metric constructed by means of the charge-symmetry operator.

Instead of minimising the Hilbert-Schmidt norm of~$\sfK$,
it is also possible to consider the analogous minimisation problems 
in other Schatten classes (including the operator norm of 
the Hilbert space~$\cH$).
The distinction of our choice is that the class of Hilbert-Schmidt
operators forms a Hilbert-space structure.

\section{Admissible class of quasi-self-adjoint operators}
In what follows $(\cH, \langle\cdot,\cdot\rangle)$ is a separable Hilbert space of dimension $N\in\dN\cup\{\infty\}$. We adopt the physical convention that the inner product is linear in the second entry.
The norm in $\cH$, induced by the inner product $\langle\cdot,\cdot\rangle$, 
will be denoted by $\|\cdot\|$. 
We assume that the reader is familiar with the basics of the Hilbert space theory. Nevertheless, we provide definitions for some selected important concepts and recall their key properties. For the sake of convenience, we define the set 
\[
	\dK := 
	\begin{cases}
	\{1,2,\dots,N\}, & \quad N < \infty,\\
	\dN,             & \quad N = \infty.
	\end{cases}
\]	

A sequence of vectors $\{\phi_n\}_{n=1}^N$ in $\cH$
is called a \emph{basis} if any $\phi\in\cH$ admits a unique
expansion into the series $\phi = \sum_{n=1}^N c_n\phi_n$
with the complex coefficients $\{c_n\}_{n=1}^N$.
This series is assumed to be norm-convergent if $N = \infty$.

Let $\{\chi_n\}_{n=1}^N$ be an orthonormal basis in $\cH$
and let $\sfA$ be a bounded and boundedly invertible operator in $\cH$.
By~\cite[\S VI.2]{Gohberg-Krein_1969}, 
the set of vectors $\phi_n := \sfA \chi_n$, $n\in\dK$,
is also a basis in $\cH$, called a \emph{Riesz basis}. 
Let $\cX = \{\chi_n\}_{n=1}^N$ be an orthonormal basis in $\cH$
and the family $\Psi = \{\psi_n\}_{n=1}^N$ be a basis such that
$\sum_{n=1}^N \|\psi_n - \chi_n\|^2 < \infty$.
Such a basis is called \emph{quadratically close
to an orthonormal} alias \emph{Bari basis}.
By~\cite[\S VI.2, Thm. 2.3]{Gohberg-Krein_1969}, the
Bari basis $\Psi$ is also a Riesz basis with respect to $\cX$.
For $N < \infty$, any basis is a Riesz as well as a Bari basis.

Now we introduce a class of quasi-self-adjoint operators.
\begin{hyp}\label{hyp:1}
	Let $\sfH$ be a quasi-self-adjoint operator in $\cH$ with 
	purely real simple discrete spectrum 
	and assume that the set 
	of its eigenfunctions $\Psi = \{\psi_n\}_{n=1}^N$ is a basis in $\cH$. 
	Additionally, assume that $\Psi$ is quadratically
	close to an orthonormal basis $\cX = \{\chi_n\}_{n=1}^N$.
\end{hyp}

Recall that an operator is said to have a purely discrete spectrum
if its resolvent is compact. The spectrum is said to be simple
if the geometric and the algebraic multiplicities
of all the eigenvalues are equal to one.
While the condition on basis properties is automatically satisfied
only in finite-dimensional Hilbert spaces,
it also holds for a  class of Schr\"odinger operators on a bounded interval with complex Robin
boundary conditions~\cite{KSZ}. 
On the other hand, there exist quasi-self-adjoint operators
without the Bari property~\cite{K13}.

Clearly, the operator $\sfH^*$ has the same simple discrete spectrum as $\sfH$.
Let $\Phi = \{\phi_n\}_{n=1}^N$ be the set of the eigenfunctions of $\sfH^*$. 
We adopt the normalisation for the families $\Psi$ and $\Phi$ such that:
\begin{myenum}
	\item $\langle\psi_m,\phi_n\rangle = \dl_{nm}$ for $n,m \in\dK$;
	\item $\|\phi_n\| = 1$ for $n\in\dK$.
\end{myenum}	
The normalisation condition~(i) 
implies convenient resolution-of-identity decompositions 
\[
	\sfI = \sum_{n=1}^N \psi_n\langle\phi_n,\cdot\rangle = \sum_{n=1}^N \phi_n\langle\psi_n,\cdot\rangle.
\]
According to~\cite[\S VI.3]{Gohberg-Krein_1969}, 
the family $\Phi$ is also a Bari basis in $\cH$,
being quadratically close to $\cX$.
Moreover, there exists a bounded and boundedly invertible operator $\sfA$ in $\cH$ such that 
$\chi_n = \sfA^*\phi_n = \sfA^{-1}\psi_n$ for all $ n \in\dK$.

In view of the construction in~\cite[Sec.~2.3]{KSZ},
any metric operator $\Th$ for the quasi-self-adjoint operator $\sfH$ as in Hypothesis~\ref{hyp:1} admits the following representation: 
\begin{equation}\label{eq:Theta}
	\Th = \sum_{n=1}^N C_n \langle \phi_n, \cdot\rangle \phi_n,
\end{equation}
where $C_- \le  C_n \le C_+$ for all $n\in \dK$
with some $C_-, C_+ \in ( 0, \infty )$, $C_- \le C_+$. 	
Note that the sum in~\eqref{eq:Theta} should be understood as the
strong limit in the case that $N = \infty$.
Evident ambiguity in the choice of the constants $C_n$
in the representation~\eqref{eq:Theta} reflects
the non-uniqueness of the metric.

\section{The Hilbert-Schmidt and the trace classes}
We assume that the reader is familiar with the concept of the 
compact linear operator in a Hilbert space~\cite[\S 2.6]{Birman-Solomyak}.
For a compact linear operator $\sfT \colon \cH\arr\cH$
we define its module by $|\sfT| := (\sfT^*\sfT)^{1/2}$.
The eigenvalues $s_k(\sfT)$, $k\in\dK$, of $|\sfT|$ ordered non-increasingly
and with multiplicities taken into account are called
the singular values of $\sfT$.

A compact operator $\sfT \colon \cH\arr\cH$ is \emph{Hilbert-Schmidt} 
(respectively, of \emph{trace class}) if, and only if, 
$\sum_{k=1}^N (s_k(\sfT))^2 < \infty$ 
(respectively, $\sum_{k=1}^N s_k(\sfT) < \infty$).
We denote 
by $\sS_2(\cH)$ and by $\sS_1(\cH)$ the families of all Hilbert-Schmidt and of all trace class operators over $\cH$, respectively. In particular, the inclusion $\sS_1(\cH)\subsetneq\sS_2(\cH)$ holds. Note that in a finite-dimensional
Hilbert space ($N < \infty$) any liner operator is Hilbert-Schmidt as well as of trace class.

For any operator $\sfT \in \sS_1(\cH)$ we denote by $\lm_k(\sfT)$, $k\in\dK$, its eigenvalues repeated with the algebraic multiplicities taken into account.
The trace mapping
\[
	\sS_1(\cH)\ni \sfT \mapsto \Tr \sfT := \sum_{k=1}^N \lm_k(\sfT)
\]
is well defined and the sum on the right-hand side converges absolutely.
Let $\Omg\subset\dR^d$, $d \ge 1$, be an open set.
The trace of an integral operator $\sfT \in\sS_1(L^2(\Omg))$ with the integral kernel $\sft\colon \Omg\tm\Omg\arr\dR$ satisfying $\sft\in C(\ov{\Omg\tm\Omg})$ can be computed as follows (\cf~\cite[Ex.~X.1.18]{Kato} and~\cite[Cor.~3.2]{Br91})
\[
	\Tr\sfT = \int_\Omg \sft(x,x) \, \dd x.
\]

The class $\sS_2(\cH)$ of Hilbert-Schmidt operators
over $\cH$ viewed as a linear space
and endowed with the conventional inner product 
$\langle\sfS,\sfK\rangle_2 := \Tr(\sfS^*\sfK)$
turns out to be a Hilbert space; \cf~\cite[\S III.9]{Gohberg-Krein_1969}.
The norm on $\sS_2(\cH)$ induced by the inner product $\langle\cdot,\cdot\rangle_2$
will be denoted by $\|\cdot\|_2$.

\section{The main result}
%
In this section we prove the main result of this paper. To this aim we need an auxiliary
lemma on metric operators that can be represented as the sum of the identity and a Hilbert-Schmidt operator.
\begin{lem}\label{lem:HS}
	Let $N = \infty$ and let $\sfH$ be a quasi-self-adjoint operator as in Hypothesis~\ref{hyp:1}.
	Let $\Th$ be a metric operator for $\sfH$ represented as in~\eqref{eq:Theta}.
	Then $\sfI - \Th$ is Hilbert-Schmidt if, and only if, 
	$C_n = 1 + \aa_n$ with $\aa = \{\aa_n\}_{n=1}^\infty \in \ell^2(\dN)$.
\end{lem}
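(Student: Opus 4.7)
The plan is to diagonalise the metric $\Th$ via the bounded boundedly invertible operator $\sfA$ furnished by the Riesz/Bari basis machinery, and thereby to reduce the Hilbert-Schmidt condition on $\sfI-\Th$ to an explicit diagonal condition on the sequence $(C_n)$.

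First, using $\phi_n=(\sfA^*)^{-1}\chi_n$ together with the identity $\langle(\sfA^*)^{-1}\chi_n,\cdot\rangle=\langle\chi_n,\sfA^{-1}\cdot\rangle$, I would rewrite the representation~\eqref{eq:Theta} as
\begin{equation*}
\Th=(\sfA^*)^{-1}\,\sfD\,\sfA^{-1},\qquad
\sfD:=\sum_{n=1}^\infty C_n\,\langle\chi_n,\cdot\rangle\,\chi_n,
\end{equation*}
so that $\sfD$ is diagonal in the orthonormal basis $\cX$ with eigenvalues $\{C_n\}$. Factoring out the invertible pieces yields
\begin{equation*}
\sfI-\Th=(\sfA^*)^{-1}\bigl(\sfA^*\sfA-\sfD\bigr)\sfA^{-1},
\end{equation*}
and, since $\sS_2(\cH)$ is a two-sided ideal in the algebra of bounded operators on $\cH$, it follows that $\sfI-\Th\in\sS_2(\cH)$ if and only if $\sfA^*\sfA-\sfD\in\sS_2(\cH)$.

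Next, I would exploit the Bari property. Writing $\sfA=\sfI+\sfS$, the identity $\psi_n-\chi_n=\sfS\chi_n$ together with the orthonormality of $\cX$ shows that the hypothesis $\sum_n\|\psi_n-\chi_n\|^2<\infty$ is equivalent to $\sfS\in\sS_2(\cH)$. Expanding
\begin{equation*}
\sfA^*\sfA-\sfI=\sfS+\sfS^*+\sfS^*\sfS,
\end{equation*}
each summand on the right belongs to $\sS_2(\cH)$ (with $\sfS^*\sfS$ even trace class), so $\sfA^*\sfA-\sfI\in\sS_2(\cH)$.

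Combining the two observations, $\sfA^*\sfA-\sfD=(\sfI-\sfD)+(\sfA^*\sfA-\sfI)$, which lies in $\sS_2(\cH)$ if and only if $\sfI-\sfD$ does. Since $\sfI-\sfD$ is diagonal in the orthonormal basis $\cX$ with entries $1-C_n=-\aa_n$, its Hilbert-Schmidt norm equals $\bigl(\sum_n|\aa_n|^2\bigr)^{1/2}$, which is finite precisely when $\aa\in\ell^2(\dN)$. I do not foresee a genuine obstacle here; the only subtle points are the careful bookkeeping of adjoints in the similarity $\Th=(\sfA^*)^{-1}\sfD\sfA^{-1}$ and the correct invocation of the ideal properties of $\sS_2(\cH)$.
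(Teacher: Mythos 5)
Your argument is correct, and at its core it performs the same reduction as the paper: conjugation by $\sfA$ (with $\psi_n=\sfA\chi_n$, $\phi_n=(\sfA^*)^{-1}\chi_n$) turns the Hilbert--Schmidt question for $\sfI-\Th$ into one for $\sfA^*(\sfI-\Th)\sfA$, and your identity $\sfA^*(\sfI-\Th)\sfA=(\sfA^*\sfA-\sfI)-\sfV$, with $\sfV$ the diagonal operator with entries $\aa_n$ in the basis $\cX$, is in fact the paper's decomposition $\sfU=\sfW-\sfV$, because the paper's operator $\sfW=\sum_n\chi_n\langle\sfA^*(\psi_n-\phi_n),\cdot\rangle$ coincides with $\sfA^*\sfA-\sfI$. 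Where you genuinely differ is in how this non-diagonal remainder is shown to be Hilbert--Schmidt: the paper estimates $\|\sfW\|_2^2\le\|\sfA\|^2\sum_n\|\psi_n-\phi_n\|^2$, which relies on the Gohberg--Krein fact (quoted before the lemma) that the dual family $\Phi$ is also quadratically close to $\cX$; you instead set $\sfS:=\sfA-\sfI$, note $\sfS\chi_n=\psi_n-\chi_n$ so that $\sfS\in\sS_2(\cH)$ follows directly from the hypothesis $\sum_n\|\psi_n-\chi_n\|^2<\infty$, and expand $\sfA^*\sfA-\sfI=\sfS+\sfS^*+\sfS^*\sfS$. Your route thus uses only the quadratic closeness of $\Psi$ to $\cX$ and packages both implications in one exact operator identity, at the cost of the extra (elementary) expansion of $\sfA^*\sfA$; the paper's route is marginally shorter once the Bari property of $\Phi$ is taken as known. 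The small points you flag are indeed the only ones to state explicitly and are unproblematic: the diagonal operator with entries $C_n$ is bounded since $C_-\le C_n\le C_+$, which justifies the strong-convergence manipulation giving $\Th=(\sfA^*)^{-1}(\sfI+\sfV)\sfA^{-1}$, and finiteness of $\sum_n\|\sfS\chi_n\|^2$ over the single orthonormal basis $\cX$ already yields $\sfS\in\sS_2(\cH)$.
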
	
\begin{proof}
	Define the following auxiliary operators
	\[
		\sfU := \sfA^*\big(\sfI - \Th\big)\sfA,
		\qquad
		\sfV := \sum_{n=1}^\infty\aa_n \chi_n \langle \chi_n, \cdot\rangle,
		\quad\text{and}\quad
		\sfW := 
		\sum_{n=1}^\infty \chi_n \big\langle \sfA^*(\psi_n - \phi_n),\cdot \big\rangle. 
	\]
	The operator $\sfU$ can be represented as
	\[
	\begin{split}
		\sfU 
		& = 
		\sum_{n=1}^\infty
		\left [ 
			\sfA^*\phi_n\langle \sfA^*\psi_n, \cdot\rangle 
			- 
			\big[1+\aa_n\big]\sfA^*\phi_n \langle \sfA^*\phi_n, \cdot \rangle
		\right ]\\
		& = 
		\sum_{n=1}^\infty
		\left [ 
			\chi_n \langle \chi_n, \cdot \rangle
			+
			\chi_n \langle \sfA^*(\psi_n - \phi_n), \cdot \rangle 
			- 
			\big[ 1 + \aa_n \big] \chi_n \langle \chi_n, \cdot \rangle
		\right ]  
		= \sfW - \sfV.
	\end{split}
	\]
	The Hilbert-Schmidt norm of
	$\sfV$ is given by $\|\sfV\|_2 = \|\aa\|_{\ell^2(\dN)}$.
	For the operator $\sfW$ we can estimate
	the square of the Hilbert-Schmidt norm as
	follows
	\begin{equation*}\label{key}
		\|\sfW\|^2_2 
		=
		\sum_{n=1}^\infty \sum_{m=1}^\infty
		\big|\langle\sfA^*(\psi_n - \phi_n),\chi_m\rangle\big|^2
		= 
		\sum_{n=1}^\infty\big\|\sfA^*(\psi_n - \phi_n)\big\|^2
		\le 
		\|\sfA\|^2 \sum_{n=1}^\infty\|\psi_n - \phi_n\|^2 < \infty.
	\end{equation*}
	Suppose that $\sfI - \Th$ is a Hilbert-Schmidt operator. 
	Then the operator $\sfU$ is Hilbert-Schmidt as well and we have 
	\[
		\|\aa\|_{\ell^2(\dN)}^2 
		= 
		\|\sfV\|^2_2 
		\le 
		2\|\sfU\|_2^2 + 2\|\sfW\|^2_2
		\le
		2\|\sfU\|_2^2 + 2\|\sfA\|^2\sum_{n=1}^\infty \|\psi_n - \phi_n\|^2 
		< 
		\infty.
	\]
	\smallskip
	Second, suppose that $\Th$ is as in~\eqref{eq:Theta}
	with $C_n = 1+\aa_n$ where $\aa \in \ell^2(\dN)$.
	Then
	\[
	\begin{split}
		\|\sfI - \Th\|^2_2 
		&\le 
		\|\sfA^{-1}\|^4\|\sfU\|^2_2
		\le 
		2\|\sfA^{-1}\|^4\left (\|\sfW\|^2_2 + \|\sfV\|^2_2\right)	\\
		& \le 
		2\|\sfA^{-1}\|^4\|\sfA\|^2	\sum_{n=1}^\infty\|\psi_n - \phi_n\|^2 
		+
		2\|\sfA^{-1}\|^4\|\aa\|^2_{\ell^2(\dN)} <\infty. \qedhere		
	\end{split}
	\]
\end{proof}	
Next, we introduce the cone of operators 
\[
	\sC 
	:= 
	\left\{
		\sum_{n=1}^N \big[1+ \aa_n\big]\langle\phi_n,\cdot\rangle\phi_n\colon 
		\aa \in \ell^2(\dK;\cI)
	\right\},
	\qquad\text{where}\quad\cI = [-1,\infty).
\]
For $\Th \in\sC$, we call $\aa = \aa(\Th)\in\ell^2(\dK;\cI)$ its \emph{characteristic vector}.
It is straightforward to see that for any $\omg \in [0,1]$ and $\Th_1,\Th_2 \in \sC$ 
we have $\omg\Th_1 + (1-\omg)\Th_2 \in\sC$. 
The boundary $\p\sC$ and the interior $\sC^\circ$ of $\sC$ are defined by
\[
	\p\sC     := \big\{ \Th\in\sC\colon \min\aa(\Th) = -1\big\}\and
	\sC^\circ := \big\{ \Th\in\sC\colon \min\aa(\Th) > -1\big\}.
\]
In view of Lemma~\ref{lem:HS} any metric for $\sfH$, which can be represented
as $\Th = \sfI + \sfK$ with a Hilbert-Schmidt $\sfK$, belongs to $\sC^\circ$.
Now, we consider the minimisation problem:
\begin{equation}\label{eq:problem}
  \inf_{\Th\in\sC} \|\Th - \sfI\|_2 
  \,.
\end{equation}
%

%
\begin{prop}\label{prop:main}                        
	Under Hypothesis~\ref{hyp:1}, the minimisation problem~\eqref{eq:problem}
	has a unique minimiser $\Th_\star\in\sC$.
\end{prop}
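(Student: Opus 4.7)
The plan is to recast~\eqref{eq:problem} as a nearest-point projection in the Hilbert space $\sS_2(\cH)$ and then invoke the classical projection theorem. By Lemma~\ref{lem:HS}, every $\Th\in\sC$ satisfies $\Th-\sfI\in\sS_2(\cH)$, so the objective $\|\Th-\sfI\|_2$ is finite throughout $\sC$ and the problem is equivalent to minimising $\|\sfT\|_2$ over $\sfT\in\sC-\sfI\subset\sS_2(\cH)$. It therefore suffices to prove that $\sC-\sfI$ is a non-empty, closed, and convex subset of the Hilbert space $\sS_2(\cH)$.

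The main step is an affine parameterisation of $\sC-\sfI$ by the sequence space $\ell^2(\dK;\cI)$. Let $\sfP:=\sum_{n=1}^N\langle\phi_n,\cdot\rangle\phi_n$ be the element of $\sC$ corresponding to $\aa=0$, and define the linear map $L\colon\ell^2(\dK;\dR)\arr\sS_2(\cH)$ by $L(\aa):=\sum_{n=1}^N\aa_n\langle\phi_n,\cdot\rangle\phi_n$, so that $\Th-\sfI=(\sfP-\sfI)+L(\aa(\Th))$ for every $\Th\in\sC$. The key technical claim is that $L$ is a topological isomorphism onto its range. To verify it, I would use the operator $\sfA$ from the Bari basis representation, with $\chi_n=\sfA^*\phi_n$, to write $L(\aa)=(\sfA^*)^{-1}\sfD'_\aa\sfA^{-1}$, where $\sfD'_\aa:=\sum_{n=1}^N\aa_n\langle\chi_n,\cdot\rangle\chi_n$ is diagonal in the orthonormal basis $\cX$ and hence satisfies $\|\sfD'_\aa\|_2=\|\aa\|_{\ell^2(\dK)}$. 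Combining this with the inverse identity $\sfD'_\aa=\sfA^*L(\aa)\sfA$ and the estimate $\|\sfB\sfT\sfC\|_2\le\|\sfB\|\,\|\sfC\|\,\|\sfT\|_2$ for bounded $\sfB,\sfC$ and Hilbert-Schmidt $\sfT$ then yields the two-sided bound
\[
	\|\sfA\|^{-2}\|\aa\|_{\ell^2(\dK)}\le\|L(\aa)\|_2\le\|\sfA^{-1}\|^2\|\aa\|_{\ell^2(\dK)}.
\]

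The admissible set $\ell^2(\dK;\cI)=\{\aa\in\ell^2(\dK;\dR):\aa_n\ge-1\text{ for all }n\in\dK\}$ is non-empty (it contains $\aa=0$), convex, and closed in $\ell^2(\dK;\dR)$; the closedness follows because $\ell^2$-convergence implies componentwise convergence, which preserves the pointwise constraint $\aa_n\ge-1$. Through the bi-Lipschitz affine identification of the previous step, $\sC-\sfI$ inherits these three properties in $\sS_2(\cH)$, and the Hilbert-space projection theorem then produces the required unique minimiser $\Th_\star\in\sC$. The principal obstacle is the lower bound in the norm equivalence for $L$: without it $\sC-\sfI$ might fail to be closed in the Hilbert-Schmidt topology and the infimum need not be attained; everything else reduces to routine convex analysis.
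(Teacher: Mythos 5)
Your argument is correct, and it reaches the result by a route that is organised differently from the paper's. The paper does not quote the projection theorem; it reproves it in situ: it takes a minimising sequence, shows it is Cauchy in $\sS_2(\cH)$ via the parallelogram identity and convexity of $\sC$, and then verifies \emph{by hand} that the limit $\Th_\star=\sfI+\sfK_\star$ lies in $\sC$ -- by testing against the biorthogonal family ($\Th_\star\psi_n=\lim_m\Th_m\psi_n\in{\rm span}\{\phi_n\}$, the bound $\|\aa(\Th_\star)\|_{\ell^2(\dK)}\le\|\sfA\|^2\|\Th_\star-\sfI\|_2$, and $1+\aa_n(\Th_\star)=\lim_m(1+\aa_n(\Th_m))\ge 0$); uniqueness is again the parallelogram identity. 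You instead externalise exactly this membership verification into a closedness statement: the affine bi-Lipschitz parameterisation $\Th-\sfI=(\sfP-\sfI)+L(\aa)$ with $L(\aa)=(\sfA^*)^{-1}\sfD'_\aa\sfA^{-1}$ and the two-sided bound $\|\sfA\|^{-2}\|\aa\|_{\ell^2(\dK)}\le\|L(\aa)\|_2\le\|\sfA^{-1}\|^2\|\aa\|_{\ell^2(\dK)}$ shows that $\sC-\sfI$ is the image of the closed convex set $\ell^2(\dK;\cI)$ under a map with closed range, hence closed and convex in $\sS_2(\cH)$, and the classical projection of $0$ does the rest. You correctly identify the lower bound for $L$ as the crux -- it is the abstract counterpart of the paper's eigenvector computations, and it makes explicit the identification of $\sC$ with $\ell^2(\dK;\cI)$ that the paper only uses implicitly (and re-derives piecemeal in its existence step). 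What the paper's version buys is self-containedness and intermediate facts reused later; what yours buys is a cleaner structural statement and a proof of closedness of $\sC-\sfI$ valid beyond the particular minimising sequence. One cosmetic point: Lemma~\ref{lem:HS} is stated for \emph{metric} operators (and $N=\infty$), so citing it for arbitrary $\Th\in\sC$, including boundary elements with some $\aa_n=-1$, is a slight overreach; this is harmless, since your own decomposition $\Th-\sfI=(\sfP-\sfI)+L(\aa)$ together with the lemma applied to the genuine metric $\sfP$ (all $C_n=1$) gives $\Th-\sfI\in\sS_2(\cH)$ for every element of the cone, and for $N<\infty$ the claim is trivial.
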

%
\begin{proof}
	We split the proof into two steps. First, we show existence of a minimiser, then we 
	prove its uniqueness.
	
	\noindent {\bf\emph{Step 1 (existence).}}
	Suppose that the infimum in~\eqref{eq:problem} equals $M \ge 0$.
	Let $(\Th_m)_m\subset\sC$ be a minimising sequence	that is
	$\dl_m := \| \Th_m - \sfI\|_2^2 - M^2 \arr 0^+$ as $m\arr \infty$.	
	Employing the parallelogram identity in the Hilbert space $\sS_2(\cH)$, we find
	\[
		\| \Th_k - \Th_l\|_2^2 
		= 
		2\|\Th_k - \sfI\|_2^2 + 2\|\Th_l - \sfI\|_2^2
		-
		4\left\|\tfrac{\Th_k + \Th_l}{2} - \sfI\right\|_2^2
		\le
		2\dl_k + 2\dl_l\arr 0^+,\quad k,l\arr \infty.
	\]
	Hence, $\sfI - \Th_m$ is a Cauchy sequence in $\sS_2(\cH)$. 
	Thus, $\sfI - \Th_m$ converges in the norm $\|\cdot\|_2$ to some Hilbert-Schmidt operator $\sfK_\star$. 
	Moreover, by continuity we infer that $\|\sfK_\star\|_2 = M$.
	
	Next, we  show that $\Th_\star = \sfI + \sfK_\star \in \sC$.
	Indeed, for every $n\in\dK$ we have
	$\Th_\star \psi_n = \limm\Th_m\psi_n \in {\rm span}\,\{\phi_n\}$.
	Thus,	we conclude that
	\[
		\Th_\star = \sum_{n=1}^N \big[1+ \aa_n(\Th_\star)\big]\langle\phi_n,\cdot\rangle\phi_n,
	\]
	and it only remains to show that $\aa(\Th_\star) \in \ell^2(\dK;\cI)$.
	Since $\Th_\star - \sfI$ is Hilbert-Schmidt,
	\[
	\begin{split}
		\|\aa(\Th_\star)\|_{\ell^2(\dK)}^2
		& 
		= \sum_{n=1}^N |\langle\psi_n, ( \Th_\star - \sfI )\psi_n\rangle|^2
		= \sum_{n=1}^N |\langle\chi_n, \sfA^* ( \Th_\star - \sfI ) \sfA\chi_n\rangle|^2\\
		& 
		\le \big\| \sfA^* (\Th_\star - \sfI) \sfA \big\|^2_2
		\le \|\sfA\|^4 \| \Th_\star - \sfI \|^2_2  < \infty.
	\end{split}
	\] 
	Finally, we check that $\aa_n(\Th_\star) \in \cI$. Indeed, we have
	\[
		1 + \aa_n(\Th_\star) 
		= \langle \phi_n, \Th_\star \psi_n \rangle 
		= \limm \langle \phi_n, \Th_m \psi_n \rangle 
		= \limm (1+\aa_n(\Th_m)) \ge 0.
	\]
	{\bf \emph{Step 2 (uniqueness).}}
	Suppose that for some $\Th_1,\Th_2\in\sC$
	we have $\| \Th_1 - \sfI\|_2 = \| \Th_2 - \sfI\|_2 = M$.
	For the arithmetic mean $\Th_3 := \frac12(\Th_1 + \Th_2)\in \sC$, 
	the parallelogram identity yields
	\[
		\|\Th_3 - \sfI\|^2_2 + \frac14\|\Th_1 - \Th_2\|^2_2
		= 
		\frac12\big( \|\Th_1 - \sfI\|_2^2 + \|\Th_2 - \sfI\|_2^2 \big) = M^2.
	\]
	Hence, using that $M = \inf_{\Th\in\sC} \|\Th - \sfI\|_2$ we get
	\[
		\|\Th_1 - \Th_2\|^2_2 
		= 
		4\big( M^2 - \|\Th_3 - \sfI\|^2_2 \big) \le 0.
	\]
	Therefore, we conclude that $\Th_1 = \Th_2$. 	
\end{proof}
Now we have all the tools to formulate and prove the main result of this paper.
\begin{thm}\label{thm:main}
	Let $\sfH$ be an operator as in Hypothesis~\ref{hyp:1}
	and let $\Th_\star\in\sC$ be the unique minimiser
	for the problem~\eqref{eq:problem}. Then the following hold.
	\begin{myenum}
		\item If $\Th_\star\in\sC^\circ$, 	
		then $\Th_\star$ is the minimally anisotropic metric for $\sfH$
		and its characteristic vector $\aa = \aa(\Th_\star)$ satisfies
		the Euler-Lagrange equations
		\begin{equation}\label{eq:EL}
			\sum_{m=1}^N |\langle\phi_m,\phi_n\rangle|^2	\aa_m
			+ 
			\sum_{m\ne n} |\langle\phi_m,\phi_n\rangle|^2 = 0, \qquad \forall\, n\in\dK.
		\end{equation} 

		\item 
		If the system of equations~\eqref{eq:EL}
		has a solution $\aa\in \ell^2(\dK;(-1,\infty))$,
		then such solution is unique in $\ell^2(\dK;(-1,\infty))$ 
		and the operator
		$\Th\in\sC^\circ$ with the characteristic vector $\aa(\Th) = \aa$ coincides with $\Th_\star$.
		
		\item If $\Th_\star\in \p\sC$,
		then the minimally anisotropic metric
		operator for $\sfH$ does not exist.
	\end{myenum}	 
\end{thm}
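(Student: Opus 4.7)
The plan is to reduce the minimisation problem~\eqref{eq:problem} to a strictly convex quadratic optimisation on $\ell^2(\dK;\cI)$ via the characteristic-vector parametrisation $\aa\mapsto\Th(\aa)$. Set $P := \sum_{n=1}^N \phi_n\langle\phi_n,\cdot\rangle$; applying Lemma~\ref{lem:HS} with $\aa \equiv 0$ gives $P - \sfI \in \sS_2(\cH)$, so that
\[
	\Th(\aa) - \sfI = (P - \sfI) + \sum_{n=1}^N \aa_n\, \phi_n\langle\phi_n,\cdot\rangle.
\]
Expanding the Hilbert--Schmidt norm squared yields
\[
	F(\aa) := \|\Th(\aa) - \sfI\|_2^2
	= \|P-\sfI\|_2^2 + 2\sum_{n} J_n \aa_n + \sum_{n,m} G_{nm}\aa_n\aa_m,
\]
with $J_n = \langle\phi_n,(P-\sfI)\phi_n\rangle = \sum_{m\ne n}|\langle\phi_n,\phi_m\rangle|^2$ and Gram-type kernel $G_{nm} = |\langle\phi_n,\phi_m\rangle|^2$. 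Using $\phi_m = \sfA^{-*}\chi_m$ and Parseval, $\sum_m G_{nm} \le \|\sfA^{-1}\|^4$, so by Schur's test $G$ is bounded on $\ell^2(\dK)$ and $F$ is a continuous quadratic functional.

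For item (i), the assumption $\Th_\star \in \sC^\circ$ means $\aa^\star := \aa(\Th_\star)$ satisfies $\min\aa^\star > -1$, so $\aa^\star + te_n \in \ell^2(\dK;\cI)$ for every $n$ and all sufficiently small $|t|$. Minimality at $\aa^\star$ forces each partial derivative to vanish:
\[
	\tfrac{1}{2}\,\tfrac{\p F}{\p\aa_n}(\aa^\star) = (G\aa^\star)_n + J_n
	= \sum_{m}|\langle\phi_n,\phi_m\rangle|^2\,\aa^\star_m + \sum_{m\ne n}|\langle\phi_n,\phi_m\rangle|^2 = 0,
\]
which is exactly~\eqref{eq:EL}. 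That $\Th_\star$ is a bona-fide metric then follows because $\min\aa^\star > -1$ together with $\aa^\star \in \ell^2$ confines the coefficients $1 + \aa^\star_n$ to a compact subinterval of $(0,\infty)$, and Lemma~\ref{lem:HS} places $\Th_\star - \sfI$ in $\sS_2(\cH)$.

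Item (ii) hinges on strict convexity. If $G\aa = 0$ then $\|\sum_n \aa_n\phi_n\langle\phi_n,\cdot\rangle\|_2^2 = \langle\aa,G\aa\rangle = 0$, so $\sum_n\aa_n\phi_n\langle\phi_n,\cdot\rangle \equiv 0$; applying it to $\psi_k$ yields $\aa_k\phi_k = 0$ and hence $\aa_k = 0$ for every $k$. Thus $G$ is strictly positive, $F$ is strictly convex on $\ell^2(\dK)$, and any critical point lying in the open set $\ell^2(\dK;(-1,\infty))$ is automatically the unique global minimiser of $F$ on the full feasible set $\ell^2(\dK;\cI)$. By Proposition~\ref{prop:main} this minimiser coincides with $\aa(\Th_\star)$, which simultaneously yields uniqueness of the EL solution in $\ell^2(\dK;(-1,\infty))$ and the identification $\Th = \Th_\star$.

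For item (iii), assume $\Th_\star \in \p\sC$. Since $\aa^\star \in \ell^2(\dK)$ forces $\aa_n^\star \to 0$, the minimum $-1$ must be attained at some finite index $n_0$, so $\Th_\star\phi_{n_0} = 0$ and $\Th_\star$ is not invertible, hence not a metric. The uniqueness part of Proposition~\ref{prop:main} then rules out any element of $\sC^\circ$ -- and, by Lemma~\ref{lem:HS}, any admissible metric with $\Th - \sfI \in \sS_2(\cH)$ -- from attaining the infimum in~\eqref{eq:problem}, so no minimally anisotropic metric exists. I expect the main technical obstacle to lie in rigorously justifying the quadratic expansion of $F$, specifically the boundedness of the Hessian $G$ on $\ell^2(\dK)$ (i.e.\ the $\ell^2$-boundedness of the Hadamard square of the Gram matrix of the Bari basis $\Phi$); once this is in place, (i)--(iii) follow from classical convex-analysis arguments combined with the uniqueness established in Proposition~\ref{prop:main}.
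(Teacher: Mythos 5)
Your reduction of \eqref{eq:problem} to a convex quadratic functional $F$ of the characteristic vector is sound, and parts (i) and (ii) are essentially correct: the vanishing of the partial derivatives of $F$ is the same computation as the paper's derivatives $\df_n(0)=0$ and reproduces \eqref{eq:EL}, while strict positivity of the kernel $G_{nm}=|\langle\phi_n,\phi_m\rangle|^2$ gives the uniqueness and the identification with $\Th_\star$ in (ii), playing the role of the paper's Pythagorean/trace argument. Two small repairs: for the linear term of $F$ you should note that $J_n=\langle\phi_n,(\Th_0-\sfI)\phi_n\rangle$ with $\Th_0:=\sum_n\phi_n\langle\phi_n,\cdot\rangle$ and $\Th_0-\sfI\in\sS_2(\cH)$, which yields $\sum_n J_n<\infty$ (so the expansion and the differentiation are legitimate); and in (iii) it is $\Th_\star\psi_{n_0}$, not $\Th_\star\phi_{n_0}$, that vanishes when $\aa_{n_0}(\Th_\star)=-1$.

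The genuine gap is in part (iii). From Proposition~\ref{prop:main} you may conclude only that no element of $\sC^\circ$ attains the infimum of \eqref{eq:problem} taken over the whole cone $\sC$; this does not yet rule out a minimally anisotropic metric. That object is, by definition, a minimiser of $\|\Th-\sfI\|_2$ over the set of metrics $\sfI+\sfK$ with $\sfK$ Hilbert--Schmidt, i.e.\ over $\sC^\circ$, and a priori one could have $\inf_{\Th\in\sC^\circ}\|\Th-\sfI\|_2>M:=\|\Th_\star-\sfI\|_2$ with this larger infimum attained at an interior point. To close the argument you must prove $\inf_{\Th\in\sC^\circ}\|\Th-\sfI\|_2=M$; combined with the strict inequality $\|\Th-\sfI\|_2>M$ for every $\Th\in\sC^\circ$ (uniqueness in Proposition~\ref{prop:main}), this is exactly what shows the infimum over metrics is not attained. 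The paper devotes the second half of its proof of (iii) to this point, constructing an explicit sequence $\Th_m\in\sC^\circ$ with $\|\Th_m-\Th_\star\|_2\to0$. In your convex framework the quickest fix is the segment $\Th_t:=(1-t)\Th_\star+t\,\Th_0$ for $t\in(0,1]$: its characteristic vector is $(1-t)\aa(\Th_\star)$, whose minimum is at least $-(1-t)>-1$, so $\Th_t\in\sC^\circ$, and $\|\Th_t-\Th_\star\|_2=t\,\|\Th_0-\Th_\star\|_2\to0$ as $t\to0^+$, whence $\|\Th_t-\sfI\|_2\to M$. Without some such approximation step the conclusion of (iii) does not follow from what you wrote.
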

\begin{proof}
	(i) Clearly, $\Th_\star$ is a metric operator for $\sfH$ by~\eqref{eq:Theta}. 
	By Lemma~\ref{lem:HS} it can be represented as $\Th_\star = \sfI + \sfK$
	with a Hilbert-Schmidt $\sfK$. Moreover, Proposition~\ref{prop:main} implies
	$\|\Th_\star -\sfI\|_2 < \|\Th - \sfI\|_2$ for any $\Th\in\sC^\circ$. Thus, 
	$\Th_\star$ is minimally anisotropic.
	
	Now we derive the Euler-Lagrange equations in~\eqref{eq:EL}.
	The condition of minimality $\|\sfI - \Th_\star\|_2$ implies that the functions
	\begin{equation}\label{eq:fn}
		f_n(\eps) := 
		\big\| \sfI - \Th_\star + \eps\langle\phi_n,\cdot\rangle\phi_n\big\|_2^2,
		\qquad \forall\,n\in\dK,
	\end{equation}
	satisfy $\df_n(0) = 0$. 
	That is we have
	\begin{equation}\label{eq:fn_prime}
	\begin{split}
		\df_n(0) 
		& = 
		\frac{\dd}{\dd\eps}
		\Big(
			\| \sfI - \Th_\star \|_2^2 
			+ 
			2\eps	\Tr\big[( \sfI - \Th_\star ) \phi_n \langle\phi_n,\cdot\rangle \big]
			+	
			\eps^2 \| \phi_n \langle \phi_n, \cdot\rangle \|^2_2
		\Big)
		\Big|_{\eps = 0}\\
		& =
		2\Tr\big[ (\sfI - \Th_\star)
		\phi_n \langle\phi_n,\cdot\rangle \big] 
		= 0, 
		\qquad\forall\, n\in \dK.
	\end{split}	
	\end{equation}
	Hence, using the shorthand notation $\aa_k = \aa_k(\Th_\star)$ we get
	\[
	\begin{split}
		0 & = -\frac{\df_n(0)}{2} 
		= 
		\Tr\left(
		\sum_{k=1}^N 
		\big(\phi_k + \aa_k\phi_k -\psi_k\big)  
		\langle\phi_k,\phi_n\rangle	
		\langle\phi_n,\cdot\rangle
		\right)\\
		& =
		\sum_{k=1}^N
		\Tr
		\Big( 
		\big( \aa_k\phi_k + \phi_k - \psi_k \big)
		\langle \phi_k, \phi_n \rangle
		\langle \phi_n, \cdot \rangle
		\Big)
		=
		\sum_{k=1}^N 
		\big\langle\phi_n,\aa_k\phi_k+ \phi_k-\psi_k\big\rangle
		\langle\phi_k, \phi_n\rangle\\
		& =
		\sum_{k=1}^N  |\langle\phi_k, \phi_n\rangle|^2 + 
		\sum_{k=1}^N \aa_k |\langle\phi_k,\phi_n\rangle|^2 - \|\phi_n\|^2 
		= 
		\sum_{k\ne n} |\langle\phi_k, \phi_n\rangle|^2 + 
		\sum_{k=1}^N \aa_k|\langle\phi_k, \phi_n\rangle|^2. 
	\end{split}
	\]
	\noindent 
	(ii)
	Let $\aa' \in \ell^2(\dK;(-1,\infty))$ be a solution of~\eqref{eq:EL}. Consider the operator $\Th_1\in\sC^\circ$
	with the characteristic vector $\aa(\Th_1) = \aa'$.
	As in the proof of (i), one has
	$\Tr((\sfI - \Th_1)\phi_n\langle\phi_n,\cdot\rangle) = 0$
	for all $n\in\dK$.
	Let $\Th_2\in\sC^\circ$ be such that $\Th_2\ne \Th_1$.
	Hence, we obtain
	\[
		\|\sfI - \Th_2\|^2_2 
		= 
		\|\sfI - \Th_1 + \Th_1 - \Th_2\|^2_2
		=
		\|\sfI - \Th_1\|^2_2 + \|\Th_1 - \Th_2\|^2_2
		> 
		\|\sfI - \Th_1\|^2_2. 
	\]
	Thus, $\Th_1$ is indeed the minimally anisotropic metric.
	Existence of another solution $\aa''\in\ell^2(\dK;(-1,\infty))$ for~\eqref{eq:EL} contradicts uniqueness of the minimally
	anisotropic metric.

	\noindent (iii)
	By~\eqref{eq:Theta} we infer that $\Th_\star$ is  \emph{not} a metric 
	for $\sfH$. Proposition~\ref{prop:main} yields that
	any metric $\Th\in\sC^\circ$ for $\sfH$ 
	satisfies the inequality $M := \|\Th_\star - \sfI\|_2 < \|\Th - \sfI\|_2$.
	If $N < \infty$, then it is easy to construct a sequence of metrics $\Th_m\in\sC^\circ$, $m\in\dN$,
	which converges in the Hilbert-Schmidt norm to $\Th_\star$, and we omit this construction.
	If $N = \infty$, then we consider the following sequence of operators
	\[
		\Th_m 
		:= 
		\sum_{ n = 1 }^m \big[ 1 + \aa_n(\Th_\star) + e^{-m} \big] \phi_n\langle\phi_n, \cdot
		\rangle
		+
		\sum_{ n = m+1 }^\infty \phi_n\langle\phi_n, \cdot\rangle.
	\]
	It is easy to check that $\Th_m\in\sC^\circ$. Moreover, we get
	\[
	\begin{split}
		M & \le \limm \|\Th_m - \sfI\|_2
		\le \|\Th_\star - \sfI\|_2 + \limm\|\Th_m - \Th_\star\|_2\\
		&  
		= 
		M + 
		\|\sfA^{-1}\|^2
		\limm	\|\sfA^*(\Th_m - \Th_\star)\sfA\|_2 \\
		&= 
		M + 
		\|\sfA^{-1}\|^2\limm\Big (m e^{-2m} + \textstyle{\sum}_{n=m+1}^\infty |\aa_n(\Th_\star)|^2\Big)^\frac12
		= M.
	\end{split}	
	\]
	Hence, we conclude that $\limm \|\Th_m - \sfI\|_2 = \|\Th_\star - \sfI\|_2$ and thus
	the minimally anisotropic metric does not exist.
\end{proof}
Finally, we provide a sufficient condition for the
unique minimiser $\Th_\star\in \sC$ of~\eqref{eq:problem} to be indeed a metric for $\sfH$,
that is for $\Th_\star \in \sC^\circ$ to hold.
\begin{prop}\label{prop:existence}
	Let $\sfH$ be an operator as in Hypothesis~\ref{hyp:1}. In addition, assume that 
	\begin{equation}\label{eq:condition}
		\sum_{n=1}^N\sum_{m\ne n}
		|\langle\phi_n,\phi_m\rangle|^2 < 1\,.
	\end{equation} 
	Then
	the unique minimiser $\Th_\star$ of~\eqref{eq:problem} satisfies $\Th_\star\in \sC^\circ$, thus, being a metric for $\sfH$.
\end{prop}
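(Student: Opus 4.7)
The plan is to construct an explicit solution $\aa$ of the Euler--Lagrange system~\eqref{eq:EL} belonging to $\ell^2(\dK;(-1,\infty))$; Theorem~\ref{thm:main}(ii) will then identify the corresponding operator in $\sC^\circ$ with $\Th_\star$, yielding $\Th_\star\in\sC^\circ$. To reformulate~\eqref{eq:EL} as a linear equation on $\ell^2(\dK)$, introduce the matrix $A=(A_{nm})_{n,m\in\dK}$ with $A_{nm}:=|\langle\phi_n,\phi_m\rangle|^2$ for $n\ne m$ and $A_{nn}:=0$, together with the vector $r=(r_n)_{n\in\dK}$ with $r_n:=\sum_{m\ne n}|\langle\phi_n,\phi_m\rangle|^2$. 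Using $|\langle\phi_n,\phi_n\rangle|^2=\|\phi_n\|^4=1$, the system~\eqref{eq:EL} takes the equivalent form $(I+A)\aa=-r$, where $I$ denotes the identity on $\ell^2(\dK)$.

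The first task is to invert $I+A$ on $\ell^2(\dK)$. Cauchy--Schwarz together with $\|\phi_n\|=1$ yields $|\langle\phi_n,\phi_m\rangle|\le 1$, so $|\langle\phi_n,\phi_m\rangle|^4\le|\langle\phi_n,\phi_m\rangle|^2$, and combined with~\eqref{eq:condition} this gives
\[
\sum_{n,m\in\dK}|A_{nm}|^2 = \sum_{n}\sum_{m\ne n}|\langle\phi_n,\phi_m\rangle|^4 \le \sum_{n}\sum_{m\ne n}|\langle\phi_n,\phi_m\rangle|^2 < 1.
\]
Hence $A$ is a Hilbert--Schmidt operator on $\ell^2(\dK)$ with operator norm strictly less than one, and $(I+A)^{-1}$ is defined by a convergent Neumann series. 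Since moreover $r\in\ell^1(\dK)\subset\ell^2(\dK)$ by~\eqref{eq:condition}, the vector $\aa:=-(I+A)^{-1}r\in\ell^2(\dK)$ is a well-defined solution of~\eqref{eq:EL}.

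The crucial step is to verify $\aa_n>-1$ for every $n\in\dK$. Set $x:=-\aa=(I+A)^{-1}r$ and consider the Banach fixed-point iteration $x^{(0)}:=0$, $x^{(k+1)}:=r-Ax^{(k)}$, which converges to $x$ in $\ell^2(\dK)$---and therefore componentwise---because the map $y\mapsto r-Ay$ is a contraction. I would show by induction that $0\le x^{(k)}\le r$ componentwise for every $k\ge 1$. The base case $x^{(1)}=r$ is immediate. For the inductive step, the entrywise non-negativity of $A$ and the monotonicity bound $Ar\le r$---which follows from $(Ar)_n=\sum_m A_{nm}r_m\le r_n\cdot\sup_m r_m\le r_n$ thanks to $\sup_m r_m\le\sum_k r_k<1$---yield $0\le Ax^{(k)}\le Ar\le r$, hence $0\le x^{(k+1)}=r-Ax^{(k)}\le r$. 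Passing to the pointwise limit gives $0\le x_n\le r_n$, equivalently $-r_n\le\aa_n\le 0$; the strict inequality $r_n<1$ then produces $\aa_n>-1$, so $\aa\in\ell^2(\dK;(-1,\infty))$.

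Applying Theorem~\ref{thm:main}(ii) to this $\aa$ concludes the proof. The main obstacle is the last step: the operator-norm bound $\|A\|<1$ on $\ell^2(\dK)$ alone does not prevent the components of $\aa$ from leaving $(-1,0]$, and one must exploit the entrywise non-negativity of $A$ and $r$ together with the monotonicity $Ar\le r$---a direct consequence of the hypothesis $\sum_m r_m<1$---to localise $\aa$ in the admissible range.
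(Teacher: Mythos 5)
Your proof is correct, but it takes a genuinely different route from the paper's. The paper never touches the Euler--Lagrange system in this proof: it observes that every $\Th\in\p\sC$ satisfies $\|\Th-\sfI\|_2\ge 1$ (test $\Th-\sfI$ against $\psi_{n_0}$ with $\aa_{n_0}(\Th)=-1$), while the particular interior element $\Th_0=\sum_{n=1}^N\phi_n\langle\phi_n,\cdot\rangle$ obeys $\|\sfI-\Th_0\|_2^2\le\sum_{n}\sum_{m\ne n}|\langle\phi_n,\phi_m\rangle|^2<1$ by~\eqref{eq:condition}, so $\|\sfI-\Th_\star\|_2\le\|\sfI-\Th_0\|_2<1$ forces $\Th_\star\notin\p\sC$. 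You instead solve~\eqref{eq:EL} outright: writing it as $\aa+A\aa=-r$ with $A_{nm}=|\langle\phi_n,\phi_m\rangle|^2$ for $n\ne m$, $A_{nn}=0$, $r_n=\sum_{m\ne n}|\langle\phi_n,\phi_m\rangle|^2$, you use~\eqref{eq:condition} twice (once to get $\|A\|\le\|A\|_2<1$ via $|\langle\phi_n,\phi_m\rangle|\le 1$, once to get $\sup_n r_n<1$), and you exploit the entrywise non-negativity of $A$ and $r$ through the iteration $x^{(k+1)}=r-Ax^{(k)}$ to trap the solution in $-r_n\le\aa_n\le 0$; Theorem~\ref{thm:main}\,(ii) then identifies the corresponding operator with $\Th_\star$. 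The steps check out: the reformulation uses $\|\phi_n\|=1$ correctly, $Ar\le r$ follows from $\sum_m A_{nm}=r_n$ together with $\sup_m r_m\le\sum_m r_m<1$, the closed componentwise inequalities survive the $\ell^2$ (hence pointwise) limit, and there is no circularity since Theorem~\ref{thm:main} is proved independently of Proposition~\ref{prop:existence}. The trade-off: the paper's comparison argument is shorter and avoids solving~\eqref{eq:EL} altogether, whereas your constructive argument yields more --- unique solvability of~\eqref{eq:EL} by a Neumann series under~\eqref{eq:condition} and the explicit bounds $-r_n\le\aa_n(\Th_\star)\le 0$ on the optimal characteristic vector, which the paper's proof does not provide.
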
	  	
\begin{proof}
	Clearly, the decomposition $\sC = \p\sC\cup\sC^\circ$
	with $\p\sC\cap\sC^\circ = \varnothing$
	holds. For any $\Th\in\p\sC$ there exists $n_0\in\dK$ such that 
	$\aa_{n_0}(\Th) = -1$ and we get
	\[
		\|\Th - \sfI\|_2 
		\ge
		\|\Th - \sfI\| 
		\ge 
		\frac{\big|\big\langle
			\psi_{n_0}, (\Th -\sfI)\psi_{n_0}
			\big\rangle\big|}{\|\psi_{n_0}\|^2}	
		=
		\frac{\|\psi_{n_0}\|^2}{\|\psi_{n_0}\|^2}	
		= 1\,.
	\]
	For the constant-coefficient metric operator
	$\Th_0 = \sum_{n=1}^N \phi_n\langle\phi_n,
		\cdot\rangle \in \sC^\circ$
	we compute
	\[
		\left(\sfI - \Th_0\right)^2
		=
		\sum_{n=1}^N\sum_{m=1}^N 
		(\psi_n - \phi_n)
		\langle\phi_n,\psi_m - \phi_m\rangle
		\langle\phi_m,\cdot\rangle\,.
	\]
	Hence, we obtain
	\[
	\begin{split}
		\|\sfI - \Th_0\|_2^2
		& =
		\Tr
		\left(\sfI - \Th_0\right)^2
		=
		\sum_{n=1}^N\sum_{m=1}^N 
		\langle\phi_n,\psi_m - \phi_m\rangle
		\langle\phi_m,\psi_n - \phi_n\rangle\\
		& \le
		\sum_{n=1}^N\sum_{m=1}^N 
		|\dl_{nm} - \langle\phi_n,\phi_m\rangle|
		|\dl_{nm} - \langle\phi_m,\phi_n\rangle|\\
		& =	
		\sum_{n=1}^N\sum_{m=1}^N 
		|\dl_{nm} - \langle\phi_n,\phi_m\rangle|^2
		 =
		\sum_{n=1}^N\sum_{m\ne n}
		|\langle\phi_n,\phi_m\rangle|^2 < 1\,.\\
	\end{split}	
	\]
	Therefore, we have
	$\|\sfI - \Th_\star\|_2 \le	\|\sfI - \Th_0\|_2 <1$
	and infer that $\Th_\star \notin\p\sC$.
	Thus, we conclude $\Th_\star \in \sC^\circ$.
\end{proof}	

\begin{remark}
	The sum in~\eqref{eq:condition} can be 
	interpreted as squared Hilbert-Schmidt norm
	of the linear operator in the Hilbert space $\ell^2(\dK)$
	induced by the matrix $\{\dl_{nm} - \langle\phi_n,\phi_m\rangle\}_{n,m\in\dK}$.
	We point out that the same matrix appears in~\cite[\S VI.3]{Gohberg-Krein_1969}.
	In particular, for $\dK = \dN$, finiteness of its Hilbert-Schmidt
	is necessary and sufficient for $\Phi = \{\phi_n\}_{n=1}^{\infty}$
	to be a Bari basis, provided that $\Phi$
	is $\omg$-linearly independent.
\end{remark}

\section{Finite-dimensional examples}	
In this section we provide a couple of toy examples in two- and four-dimensional Hilbert spaces.
The example in $\dC^4$ is very special and its aim is to show that the minimally anisotropic metric
indeed does not always exist.

\subsection{$2\tm 2$ example}
Let the vectors $\phi_1,\phi_2\in\dC^2$ be linearly independent and normalised 
as $\|\phi_1\|_{\dC^2} =  \|\phi_2\|_{\dC^2} =  1$.
We select the vectors $\psi_1, \psi_2 \in \dC^2$ so that 
$\langle \psi_n, \phi_m \rangle_{\dC^2} = \dl_{nm}$, $n,m\in\{1,2\}$.
Any matrix
\[
	\sfH = 
	\lm_1 \psi_1 \langle \phi_1, \cdot \rangle	+ \lm_2\psi_2 \langle \phi_2, \cdot \rangle,
	\qquad -\infty < \lm_1< \lm_2 +\infty,
\]
can be viewed as a quasi-self-adjoint operator in the Hilbert space $\dC^2$
satisfying Hypothesis~\ref{hyp:1}. 
According to~\eqref{eq:Theta}, any metric 
for $\sfH$ can be decomposed as
\[
	\Th = 
	\big[1+\aa_1\big]\phi_1\langle\phi_1,\cdot\rangle + 
			\big[1+\aa_2\big]\phi_2\langle\phi_2,\cdot\rangle,
\]
with $\aa_1,\aa_2\in (-1,\infty)$.
Using the shorthand notation $\gg := |\langle\phi_1,\phi_2\rangle_{\dC^2}|^2$
we can write the system of Euler-Lagrange equations~\eqref{eq:EL}
as follows
\[
	\begin{cases}
		\gg\aa_1 + \aa_2    = - \gg,\\
		\aa_1    + \gg\aa_2 = - \gg.
	\end{cases}	
\]
Solving the above linear system, we find
\begin{equation}\label{eq:aa12}
	\aa_1 = \aa_2 = -\frac{\gg}{1+ \gg} > -1.
\end{equation}
Hence, by Theorem~\ref{thm:main}\,(ii) the minimally anisotropic
metric always exists and is given by
\[
	\Th_\star = \frac{\phi_1\langle\phi_1,\cdot\rangle}{1+\gg}
	+
	\frac{\phi_2\langle\phi_2,\cdot\rangle}{1+\gg}.
\] 
In the special case
\[
	\phi_1 = \begin{pmatrix} 1 \\ 0\end{pmatrix},
	\quad
	\phi_2 = \frac{1}{\sqrt{2}}
	\begin{pmatrix} 1\\ 1\end{pmatrix}
	\and
	\lm_1 = 1,\quad \lm_2 = 2,
\]
we have
\[
	\psi_1 = \begin{pmatrix} 1 \\ -1\end{pmatrix}
	\and
	\psi_2 =	\begin{pmatrix} 0 \\ \sqrt{2}\end{pmatrix}.
\]
Thus, the quasi-self-adjoint Hamiltonian $\sfH$ is given by
\[
	\sfH = 
	\psi_1\langle\phi_1,\cdot\rangle
	+
	2\psi_2\langle\phi_2,\cdot\rangle
	=
	\begin{pmatrix}
	1 &  0\\
	-1 & 0
	\end{pmatrix}
	+
	\begin{pmatrix}
	0 & 0\\
	2 & 2
	\end{pmatrix}
	= 
	\begin{pmatrix}
	1 & 0\\
	1 & 2
	\end{pmatrix}.
\]
The definition of $\gg$ and 
the formula~\eqref{eq:aa12} yield $\gg = \frac12$, $\aa_1 = \aa_2 = -\frac13$.
Thus, the minimally anisotropic metric is explicitly given by
\[
	\Th_\star =
	\frac23\phi_1\langle\phi_1,\cdot\rangle
	+
	\frac23\phi_2\langle\phi_2,\cdot\rangle
	= 
	\frac{2}{3}
	\begin{pmatrix}
	1 &  0\\
	0 & 0
	\end{pmatrix}
	+
	\frac{1}{3}
	\begin{pmatrix}
		1 & 1\\
		1 & 1
	\end{pmatrix}
	=
	\begin{pmatrix}
	1 &  \frac13\\
	\frac13 & \frac13
	\end{pmatrix}.
\]
\subsection{$4\tm 4$ example}
Let us fix $x\in (0,1)$, set $y := \sqrt{1-x^2}$, and consider the following
normalised vectors in $\dC^4$
\[
	\phi_1 = (1,0,0,0)^\top,\qquad
	\phi_2 = \big(y,x,0,0\big)^\top,\qquad
	\phi_3 = \big(y,0,x,0\big)^\top,\qquad
	\phi_4 = \big(y,0,0,x\big)^\top.
\]
We will show that for all sufficiently small
$x\in (0,1)$ the minimally anisotropic
metric for any quasi-self-adjoint Hamiltonian
$\sfH = \sum_{j=1}^4\lm_j\phi_j\langle\phi_j,\cdot\rangle$
with $\lm_j\in\dR$ ($\lm_i\ne \lm_j$ for $i\ne j$)
does not exist.
Suppose that the minimally anisotropic
metric $\Th_\star\in\sC^\circ$ with the characteristic
vector $\aa = (\aa_1,\aa_2,\aa_3,\aa_4)^\top$ exists. Taking the symmetries
into account, we conclude that $\aa_2 = \aa_3 = \aa_4$.
By Theorem~\ref{thm:main} we can write the system of Euler-Lagrange equations~\eqref{eq:EL}
with the notation $a := \aa_1$ and $b := \aa_2 = \aa_3 = \aa_4$ as follows 
\[
\begin{cases}
	a          + 3y^2b                 = - 3y^2,\\
	y^2a   + \big(1 + 2y^4\big)b = - y^2 - 2y^4.
\end{cases}	
\]
The system can be simplified as
\[
\begin{cases}
	a  + 3y^2b                    =  - 3y^2,\\
	a  + \frac{1+2y^4}{y^2}b = - 1 - 2y^2.
\end{cases}	
\]
Hence, we derive an equation on $b$ 
\[
	\frac{y^4-1}{y^2}b = 1-y^2.
\]
Finally, we get
\[
	a = -\frac{3y^2}{y^2+1}
	\and
	b = -\frac{y^2}{y^2+1}
\]
For a sufficiently small $x\in (0,1)$,
the difference
$1-y$ is arbitrarily small, we have $a < -1$ and therefore the minimally
anisotropic metric does not exist.

\section{An $\infty$-dimensional example: $\PT$-symmetric Robin Laplacian}
The feasibility of the construction of the minimally anisotropic metric
for quantum systems with $N=\infty$ may be illustrated via the $\PT$-symmetric Robin Laplacian on an interval. For this purpose, let us set $\cJ := (-\frac{\pi}{2},\frac{\pi}{2})$ 
and define the \emph{conjugation}
and \emph{parity} operators on the Hilbert space 
$(L^2(\cJ), \langle\cdot,\cdot\rangle_\cJ)$ as
\begin{equation}\label{eq:PT}
	(\cT\psi)(x) := \ov{\psi(x)}
	\and
	(\cP\psi)(x) := \psi(-x).
\end{equation}
Introduce the m-sectorial operator $\sfH_\bb$, $\bb\in\dR$, on $L^2(\cJ)$ as
\[
	\sfH_\bb\psi := -\psi'', \qquad
	\dom\sfH_\bb := 
	\big\{ \psi \in H^2(\cJ) 
		\colon \psi'\left(\pm \tfrac{\pi}{2}\right) + \ii\bb
		\psi\left( \pm \tfrac{\pi}{2}\right) = 0
	\big\}.
\label{eq:op}
\]
The spectral theory of $\sfH_\bb$ is developed in~\cite{KBZ, K4, KSZ}. 
Below we recall basic spectral properties of this operator.
\begin{prop}[{\cite{KBZ},\cite[Prop.~2.4]{KSZ}}]\label{prop:Hb_basics}
	Let the operator $\sfH_\bb$, $\bb\in\dR$, be as in~\eqref{eq:op}. 
	Let $\cP$ and $\cT$ be as in~\eqref{eq:PT}. Then the following hold.
	\begin{myenum}
		\item 
		$\sfH_\bb^* = \sfH_{-\bb}, 
		\quad 
		\sfH_\bb = \cP \sfH_\bb^* \cP, \quad 
		\PT \sfH_\bb\subseteq \sfH_\bb \PT$. 
		In particular, $\sfH_\bb$ is $\cP$-self-adjoint.
		\item
		$\s(\sfH_\bb) = \cup_{n \in \dN_0}\{ \lm_n \} \subset \R$, 	
		where $\lm_0 = \bb^2$ and $\lm_n = n^2$, $n\in\dN$.
		\item If $\bb \notin \dZ\sm\{0\}$, 
		then $\sfH_\bb$ satisfies Hypothesis~\ref{hyp:1}.
		The eigenfunctions $\{\psi_n^\bb\}_{n=0}^\infty$
		and $\{\phi_n^\bb\}_{n=0}^\infty$
		of $\sfH_\bb$ and $\sfH_\bb^*$, respectively, corresponding to 
		$\{\lm_n\}_{n \in \dN_0}$ read as
		\begin{align*}\label{efunc}
			\psi_0^\bb(x) 
			& = 
			A_0 e^{-\ii \bb (x+a)},\!
			&\psi_n^\bb(x)& = A_n\left[ \cos (n(x+a)) -	\frac{\ii \bb}{n}\sin(n(x+a))\right],\!\!\!
			 &n& \in \dN\,,\\
			\phi_0^\bb(x) 	&= B_0 e^{\ii \bb (x+a)},\!
				&\phi_n^\bb(x) & = 
			B_n\left[\cos (n (x+a)) +
			\frac{\ii \bb}{n}\sin(n(x+a))\right],\!\!\! 
			&n& \in \dN\,,
		\end{align*}
		where $a = \frac{\pi}{2}$ and the real positive constants 
		$\{A_n\}_{n=0}^\infty$
		and $\{B_n\}_{n=0}^\infty $ are chosen so that
		$(\psi_n^\beta,\phi_n^\beta) = \dl_{nm}$
		and $\|\phi_n^\beta\| = 1$. In particular,
		\[
			B_0 = \frac{1}{\sqrt{\pi}}
			\and
			B_n = 
			\sqrt{\frac{2}{\pi}}
			\frac{n}{\sqrt{n^2+ \beta^2}},
			\quad n\in\dN\,.
		\]
		\item If $ \bb \in \dZ\sm\{0\}$, then 
		the eigenvalue
		$\lm_0$ of $\sfH_\bb$ has the geometric multiplicity one and the algebraic multiplicity two, 
		and all the other eigenvalues are simple.
	\end{myenum}	
\end{prop}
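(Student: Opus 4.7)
The plan is to establish (i)--(iv) separately by direct computations on the second-order differential operator $\sfH_\bb$, exploiting the shifted variable $y = x + \pi/2$ under which $\cJ$ becomes $(0,\pi)$.

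For (i), I would integrate by parts in $\langle\sfH_\bb\psi,\varphi\rangle_\cJ - \langle\psi,-\varphi''\rangle_\cJ$ for $\psi \in \dom\sfH_\bb$ and $\varphi \in H^2(\cJ)$: after substituting the Robin boundary condition on $\psi$, the boundary terms reduce to a linear combination of $\overline{\varphi(\pm\pi/2)}$ and $\overline{\varphi'(\pm\pi/2)}$ whose vanishing for every such $\psi$ forces $\varphi'(\pm\pi/2) - \ii\bb\,\varphi(\pm\pi/2) = 0$, identifying $\sfH_\bb^* = \sfH_{-\bb}$. The remaining symmetries follow by checking that $\cP$ and $\cT$ each map $\dom\sfH_\bb$ onto $\dom\sfH_{-\bb}$ (each flips the sign of $\bb$ in the boundary condition), so that $\PT$ preserves $\dom\sfH_\bb$ and commutes with $-\partial_x^2$ on that domain.

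For (ii) and (iii), the ansatz $\psi(y) = \cos(ky) + c\sin(ky)$ for $-\psi'' = k^2\psi$, combined with the left boundary condition, fixes $c = -\ii\bb/k$; the right boundary condition then reduces, after a short manipulation, to the characteristic equation $\sin(k\pi)(\bb^2 - k^2) = 0$, whose roots yield $\lm_n = n^2$ for $n \in \dN$ and $\lm_0 = \bb^2$, the case $k = 0$ being ruled out directly. This produces the closed-form eigenfunctions of $\sfH_\bb$ and, via (i), of $\sfH_\bb^* = \sfH_{-\bb}$. The constants $A_n$, $B_n$ are fixed by imposing $\|\phi_n^\bb\| = 1$ together with the biorthogonality $\langle\psi_n^\bb,\phi_m^\bb\rangle_\cJ = \dl_{nm}$ through elementary cosine/sine integrals. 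To verify Hypothesis~\ref{hyp:1}, I would compare against the orthonormal Neumann basis $\chi_0 := 1/\sqrt{\pi}$ and $\chi_n(x) := \sqrt{2/\pi}\cos(n(x+\pi/2))$ ($n \geq 1$); expanding $\phi_n^\bb - \chi_n$ and using the explicit expression for $B_n$, each such difference can be estimated in norm by $O(\bb/n)$, whose square is summable, while the $n = 0$ term is handled separately. The analogous estimate for $\{\psi_n^\bb\}$ then follows from biorthogonality together with the already-established Bari property of $\{\phi_n^\bb\}$ via \cite[\S VI.3]{Gohberg-Krein_1969}.

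For (iv), when $\bb = \pm n$ the two candidate solutions $e^{-\ii\bb(x+\pi/2)}$ and $\cos(n(x+\pi/2)) - (\ii\bb/n)\sin(n(x+\pi/2))$ coincide, so the geometric multiplicity of the eigenvalue $n^2$ drops to one. To confirm that the algebraic multiplicity is exactly two, I would construct a generalized eigenfunction by solving the inhomogeneous Robin problem $(\sfH_\bb - n^2)\xi = \psi_0^\bb$ with an ansatz of the form $x$ times a trigonometric factor and verifying the Fredholm compatibility condition explicitly. The main obstacle is the Bari estimate in (iii), which demands uniform $O(1/n)$ control on both the trigonometric corrections to $\phi_n^\bb - \chi_n$ \emph{and} the deviation of the normalization constant $B_n$ from $\sqrt{2/\pi}$, so that both contributions to $\sum_{n=1}^\infty\|\phi_n^\bb - \chi_n\|^2$ remain summable; the analysis in (iv) is conceptually routine but requires explicit integration to rule out an algebraic multiplicity exceeding two.
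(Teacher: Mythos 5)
The paper offers no proof of this proposition at all: it is quoted from \cite{KBZ} and \cite[Prop.~2.4]{KSZ}. Your direct derivation is essentially the computation carried out in those references, and its core steps check out: the integration by parts does yield $\sfH_\bb^*=\sfH_{-\bb}$ (modulo the standard remark that $\dom\sfH_\bb^*$ lies in the maximal domain $H^2(\cJ)$, so the boundary-term argument captures the whole adjoint); the ansatz in the shifted variable gives the characteristic equation $\sin(k\pi)(\bb^2-k^2)=0$ (the pure-sine solution being excluded by the left boundary condition), hence exactly the spectrum in (ii) and the eigenfunctions in (iii); the normalisation integrals give $B_0=1/\sqrt{\pi}$ and $B_n=\sqrt{2/\pi}\,n/\sqrt{n^2+\bb^2}$; and the comparison with the Neumann basis $\chi_n(x)=\sqrt{2/\pi}\cos(n(x+\tfrac{\pi}{2}))$ indeed gives $\|\phi_n^\bb-\chi_n\|=O(1/n)$, since the deviation of $B_n$ from $\sqrt{2/\pi}$ is $O(\bb^2/n^2)$ and the sine correction contributes $O(\bb/n)$, both square-summable. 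The Jordan-block analysis in (iv) for $\bb\in\dZ\sm\{0\}$ is likewise the standard one.

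Two points in your plan need to be made explicit before Hypothesis~\ref{hyp:1} is actually verified. First, quadratic closeness to an orthonormal basis does not by itself give the basis property: Bari's theorem \cite[\S VI.2, Thm.~2.3]{Gohberg-Krein_1969} also requires $\omega$-linear independence (equivalently, minimality) of $\{\phi_n^\bb\}$, which your explicit biorthogonal family $\{\psi_n^\bb\}$ supplies; only then do you get that $\Phi$ is a Riesz/Bari basis of all of $L^2(\cJ)$, and then \cite[\S VI.3]{Gohberg-Krein_1969} transfers the property to $\Psi$ as you indicate. Second, Hypothesis~\ref{hyp:1} also demands that $\sfH_\bb$ be \emph{quasi-self-adjoint}, which your sketch never addresses; this is not automatic from (ii)--(iii) as stated, but follows once the Riesz basis property and the reality and simplicity of the eigenvalues are in hand, e.g.\ by checking that $\Th=\sum_{n}\phi_n^\bb\langle\phi_n^\bb,\cdot\rangle$ is bounded with bounded inverse and intertwines $\sfH_\bb$ and $\sfH_\bb^*$ --- exactly the construction of \cite[\S 2.3]{KSZ} recalled in~\eqref{eq:Theta}. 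With these two additions (and the explicit verification in (iv) that the Jordan chain at $\lm_0=\bb^2$ has length exactly two), your argument is complete and coincides in substance with the cited sources.
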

Proposition~\ref{prop:Hb_basics}\,(iii), representation~\eqref{eq:Theta}, and Lemma~\ref{lem:HS} imply that for $\bb\notin\dZ\sm\{0\}$ all the metrics for $\sfH_\bb$, that admit the representation
$\sfI + \sfK$ with a Hilbert-Schmidt $\sfK$, can be written as
\[
	\Th = \sum_{n=0}^\infty 
	\big[1+\aa_n\big]
	\phi_n^\bb\langle\phi_n^\bb,\cdot\rangle_{\cJ},
	\qquad \aa\in\ell^2(\dN_0;\cI),~~\min\aa > -1.
\]
The coefficients $a_{nm}:=  \langle\phi_n^\bb,\phi_m^\bb\rangle$
can be explicitly computed.
Clearly, $a_{nn} = 1$ holds for all $n\in\dN_0$
and for $n,m \ge 1$, $n \ne m$ we have
%
%
\[
\begin{split}
	a_{nm}
	& = 
	\frac{B_nB_m}{nm}
	\int_0^\pi
	\big[n\cos (n x) - 
	\ii \bb \sin(nx)\big]
	\big[m\cos (m x) +
	\ii \bb \sin(mx)\big]
	 \dd x \\
	& =
	\frac{B_nB_m \ii \bb}{2nm}
	\int_0^\pi
	\big[(n-m)\sin ((n + m) x) - (n+m)\sin((n-m) x)\big]\dd x \\
	& =
	\frac{B_nB_m(1-e^{\ii\pi (n+m)}) \ii \bb}{2nm}
	\left(
	\frac{n-m}{n+m}
	-
	\frac{n+m}{n-m}
	\right)\\
	& =
	\frac{B_nB_m(1- e^{\ii\pi (n+m)}) \ii \bb}{2nm}
	\frac{-4mn}{n^2-m^2}
	=
	\frac{2B_nB_m\ii \bb(1-e^{\ii\pi (n+m)}) }{m^2-n^2}\,.
\end{split}	
\]
For $n \ge 1$ we get
\[
\begin{split}
	a_{n0}
	& = 
	\frac{B_nB_0}{n}
	\int_0^\pi
	\big[n\cos (n x) - 
		\ii \bb \sin(nx)\big] e^{\ii \bb x}
	\dd x \\
	& =
	\frac{B_nB_0}{n}
	\left(
	\frac{2(-1)^n n\bb\sin(\pi \bb)}{\bb^2-n^2}
	+ 
	\frac{2\ii(-1)^n n\bb [(-1)^n  - \cos(\pi\bb)]}{\bb^2-n^2}
	\right)
	\\
	& =
	\frac{2B_nB_0\ii\bb\left(
		1  - e^{\ii\pi(\bb +n)}\right)}{\bb^2-n^2}\,.\\
\end{split}	
\]
The remaining coefficients can be recovered
via the relation $a_{nm} = \ov{a}_{mn}$.
The Euler-Lagrange system in~\eqref{eq:EL} reduces to
\begin{equation}\label{eq:EL_Rob}
	\sum_{m=0}^\infty |a_{nm}|^2\aa_m(\Th_\star)
	+\sum_{m\ne n} |a_{nm}|^2 = 0,\qquad n\in\dN_0.
\end{equation}
It seems however that this system can not be explicitly solved. 

On the other hand, using 
the above expressions for $a_{nm}$
and the formulae for $B_n$ in Proposition~\ref{prop:Hb_basics}\,(iii)
we obtain the following bounds on the coefficients in~\eqref{eq:EL_Rob}:
\[
	|a_{nm}|^2 \le \frac{64\bb^2}{\pi^2(m^2-n^2)^2},
	\qquad
	|a_{n0}|^2 \le 
	\frac{64\bb^2}{\pi^2(\bb^2-n^2)^2},
	\qquad n,m\in\dN,\, n\ne m	\,.
\]
Using these bounds we get for any 
$\bb \in (0,\frac12)$
\[
\begin{split}
	\sum_{n=0}^\infty\sum_{m\ne n} |a_{nm}|^2 & \le 
	\frac{64\bb^2}{\pi^2}
	\left(
	\sum_{n=1}^\infty\frac{2}{(\bb^2-n^2)^2}
	+ \sum_{n=1}^\infty
	\sum_{m > n}
	\frac{2}{(m^2-n^2)^2}\right)\\
	& <
	\frac{64\bb^2}{\pi^2}
	\left(
	\sum_{n=1}^\infty\frac{2}{(n^2-1/4)^2}
	+ \sum_{d = 1}^\infty
	\sum_{n=1}^\infty
	\frac{2}{((n+d)^2-n^2)^2}\right)\\
	& \le 
	\frac{64\bb^2}{\pi^2}
	\left(2\pi^2 - 16 +
	\frac12
	\sum_{d = 1}^\infty
	\frac{1}{d^2}
	\sum_{n=1}^\infty\frac{1}{n^2}\right)
	 =
	\frac{64\bb^2}{\pi^2}\left(2\pi^2 - 16 + \frac{\pi^4}{72}\right).
\end{split}	
\]
Hence, for all $\bb\in(0,\frac12)$ small enough we have $\sum_{n=0}^\infty\sum_{m\ne n} |a_{nm}|^2 < 1$	
and Proposition~\ref{prop:existence} yields that the unique minimizer of~\eqref{eq:problem} in this
particular
setting is indeed a metric operator for $\sfH_\bb$.

Another possibility to select a metric operator relies on the concept of the 
$\cC$-symmetry.
\begin{dfn}[Charge operator $\cC$] 
	Assume that $\sfH$ in $L^2(\cJ)$ is $\cP$-self-adjoint. We say that $\sfH$ possesses the property of $\cC$-symmetry, if there exists a bounded linear
	operator $\cC$ in $L^2(\cJ)$ such that 
	$[\sfH,\cC] = 0$, $\cC^2 = \sfI$, and $\cP\cC$ is a metric  for $\sfH$.
\end{dfn}
By~\cite[Sec.~4.3]{KSZ} the operator $\sfH_\bb$
with $\bb \notin \dZ\sm\{0\}$ possesses a
$\cC$-symmetry with the uniquely determined
charge
operator $\cC\colon L^2(\cJ)\arr L^2(\cJ)$ through the identities~$\cC^2 = \sfI$
and $\cC = \cP\Th$ with a metric operator $\Th$
as in~\eqref{eq:Theta}.
The metric for $\sfH_\beta$ induced by the charge operator $\cC$
is given by
$\Th_\cC := \cP\cC = \sfI + \sfK_\cC$,
where $\sfK_\cC$ is an integral
operator in $L^2(\cJ)$ with the kernel
\[
	\sfk_\cC(x,y) = \bb e^{-\ii \bb (y-x)}
	\left [\tan\big(\tfrac{\pi\bb}{2}\big)-
	\ii\,\sign(y-x)\right].
\]	
It is easy to see that $\|\sfK_\cC\|_2 < \infty$. 
Consider the function
\[
	\dR\ni\eps\arr f_0(\eps) = \big\| 
		\sfK_\cC 
		+ \eps\phi_0^\bb\langle\phi_0^\bb,\cdot\rangle
	\big\|_2^2.
\]
To show that $\Th_\cC$
is not the minimiser for~\eqref{eq:problem} it suffices to check that $\df_0(0) \ne 0$.
Differentiating $f_0(\cdot)$ at the point $\eps = 0$ we find
\[
\begin{split}
	\df_0(0) & = 2\Tr\big(\sfK_\cC\phi_0^\bb
	\langle\phi_0^\bb,
	\cdot\rangle\big) = 2\big\langle
	\phi_0^\bb,\sfK_\cC\phi_0^\bb\big\rangle_{\cJ}
	= 2\int_\cJ\int_\cJ
	\sfk_\cC(x,y)\ov{\phi_0^\bb(x)}\phi_0^\bb(y)\dd x \dd y\\
	& =
	\frac{2\bb}{\pi}
	\int_\cJ\int_\cJ
	\left [\tan\big(\tfrac{\pi\bb}{2}\big)-
	\ii\,\sign(y-x)\right] \dd x \dd y
	=
	2\bb\pi\tan\big(\tfrac{\pi\bb}{2}\big)\ne 0.
\end{split}
\]

\section*{Acknowledgements}	
D.\,K. was partially supported by the GA\v{C}R grant No.~18-08835S and by FCT (Portugal) through project PTDC/MAT-CAL/4334/2014. 	
V.\,L. acknowledges the support by the GA\v{C}R  grant No.~17-01706S.
M.\,Z. acknowledges the support by the GA\v{C}R grant No.~16-22945S.

%
\bibliographystyle{amsplain}

\providecommand{\bysame}{\leavevmode\hbox to3em{\hrulefill}\thinspace}
\providecommand{\MR}{\relax\ifhmode\unskip\space\fi MR }
\providecommand{\MRhref}[2]{%
  \href{http://www.ams.org/mathscinet-getitem?mr=#1}{#2}
}
\providecommand{\href}[2]{#2}

\end{document}